\DeclareMathOperator{\F}{\mathbb F}
\theoremstyle{plain}
\newtheorem{lemma}{Lemma}
\newcommand{\mF}{\mathcal F}
\newcommand{\set}[1]{\left\{{#1}\right\}}
\newcommand{\E}[2]{\mathbf E_{{#1}}\left[{#2}\right]}
\DeclareMathOperator{\sgn}{sgn}
\begin{document}

\title{Fast Sequential Decoding of Polar Codes}
\author{Peter Trifonov,  Vera Miloslavskaya, Ruslan Morozov \thanks{The authors are with the Distributed Computing and Networking Department, Saint Petersburg Polytechnic University. E-mail: petert@dcn.icc.spbstu.ru}}

\date{\today}
\maketitle

\begin{abstract}
 A new score function is proposed for stack decoding of polar codes, which enables one to accurately compare paths of different lengths. The proposed score function includes bias, which reflects the average behaviour of the correct path. This enables significant complexity reduction with respect to the original stack algorithm at the expense of  a negligible performance loss.  
\end{abstract}
\section{Introduction}
Polar codes were recently shown to be able to achieve the symmetric capacity of memoryless channels, while having low-complexity construction, encoding and decoding algorithms \cite{arikan2009channel}. However, the performance of polar codes of moderate length is substantially worse compared to LDPC and turbo codes used in today communication systems. This is both due to suboptimality of the classical successive cancellation (SC) decoding method, and poor minimum distance of polar codes. List decoding algorithm introduced in \cite{tal2015list} enables one to implement near maximum likelihood decoding of polar codes with complexity $O(Ln\log n)$, where $n$ is code length and $L$ is list size. Furthermore, the performance of polar codes concatenated with a CRC outer code \cite{tal2015list} and polar subcodes \cite{trifonov2016polar,trifonov2017randomized} under list decoding appears to be  better compared to known LDPC and turbo codes.

However, the complexity of the Tal-Vardy list decoding algorithm turns out to be rather high. It can be reduced by employing sequential decoding techniques \cite{niu2012stack,trifonov2013polar,miloslavskaya2014sequential}. These methods avoid construction of many useless low-probability paths in the code tree. Processing of such paths constitutes the most of the computational burden  of the list decoder. In this paper we show that by careful weighting of paths of different length, one can significantly reduce the computational complexity of the decoder.
 The proposed  path score function aims to estimate the conditional probability of the most likely codeword of a polar code, which may be obtained as a continuation of the considered path in the code tree.
It turns out that such function can be well approximated by the path score of the min-sum list SC decoder biased by its expected value. 
 Simulation results indicate that the proposed approach results in significant reduction of the average number of iterations performed by the decoder.

The paper is organized as follows. Section \ref{sBG} provides some background on polar codes and polar subcodes. Section \ref{sSeqDec} introduces the proposed sequential decoding method. Its improvements are discussed in Section \ref{sImprovements}. Simulation results illustrating the performance and complexity of the proposed algorithm are provided in Section \ref{sNumRes}. Finally, some conclusions are drawn.

\section{Polar codes}
\label{sBG}
\subsection{Code construction}
$(n=2^m,k)$, or $(n,k,d)$ polar code over $\F_2$ is a linear block code generated by $k$ rows of 
matrix $A_m=B_mF^{\otimes m}$, where $B_m$ is the bit reversal permutation matrix,  $F=\begin{pmatrix}1&0\\1&1
\end{pmatrix}$, and $\otimes m$ denotes 
$m$-times Kronecker product of the matrix with itself, and $d$ is the minimum distance of the code \cite{arikan2009channel}.    Hence, a codeword of a classical polar code is obtained as 
$c_0^{n-1}=u_0^{n-1}A_{m}$, where $a_s^t=(a_s,\dots,a_t)$, $u_i=0$ for $i\in \mathcal F,$  $\mathcal F\subset\set{0,\ldots,n-1}$ is the 
set of $n-k$ frozen symbol indices, and the remaining symbols of the information vector $u_0^{n-1}$ are set to the data symbols being encoded.

Let $U_0,\dots,U_{n-1}$ and $Y_0,\dots,Y_{n-1}$ be the random variables corresponding to the input symbols of the polarizing transformation $A_m$,  and output symbols of a memoryless output-symmetric channel, respectively.
It is possible to show that matrix $A_m$
transforms the original binary input memoryless output-symmetric channel
$W_0^{(0)}(Y|C)=W(Y|C)$ into bit subchannels  
$W_m^{(i)}(\mathbf Y,U_0^{i-1}|U_i),
$ 
the capacities of these subchannels converge with $m$ to $0$ or $1$, and the fraction
of subchannels with capacity close to $1$ converges to the capacity of $W_0^{(0)}(Y|C)$.
Here $\mathbf Y=Y_0^{n-1}$, and $C\in\F_2$ is the random variable corresponding to channel input.

The conventional approach to construction of an $(n,k)$ polar code assumes that $\mF$ is the set of  $n-k$ indices $i$ of bit subchannels $W_m^{(i)}(Y_0^{n-1},U_0^{i-1}|U_i)$ with the highest error probability.
 It was suggested in \cite{trifonov2016polar} to set some frozen symbols $U_i, i\in \mF,$ not to zero, but to linear combinations of  other symbols,
 i.e. the random variables should satisfy
\begin{equation} 
\label{mDynFrozen}
 U_i=\sum_{s=0}^{i-1}V_{j_i,s}U_s,
\end{equation} 
  where $V$ is a $(n-k)\times n$ binary matrix, such that its rows have last non-zero values in distinct columns, and $j_i$ is the index of the row having the last non-zero element in column $i$. 
Alternatively, these constraints can be stated as $U_0^{n-1}V^T=0$. This special shape of matrix $V$ enables one to implement decoding of polar subcodes using straightforward generalizations of the successive cancellation algorithm and its extensions. 

The obtained codes are referred to as polar subcodes. Polar codes with CRC \cite{tal2015list} can be considered as a special case of polar subcodes. Polar subcodes were shown to provide substantially better performance compared to classical polar codes under list decoding.  Therefore, we derive the decoding algorithm for the case of polar subcodes.

\subsection{The successive cancellation decoding algorithm}
Decoding of polar (sub)codes can be implemented by the successive cancellation (SC) algorithm.
It is  convenient to describe the SC algorithm in terms of probabilities $W_{m}^{(i)}\set{U_0^i=v_0^{i}|\mathbf Y=y_0^{n-1}}=W_{m}^{(i)}\set{v_0^{i}|y_0^{n-1}}$   of transmission of various vectors $v_0^{n-1}A_m$ with given values $v_0^i$, provided that the receiver observes a noisy vector $y_0^{n-1}$, i.e.
\begin{align}
\lefteqn{W_{m}^{(i)}\set{v_0^{i}|y_0^{n-1}}=\frac{W_m^{(i)}(y_0^{n-1},v_0^{i-1}|v_i)}{2W(y_0^{n-1})}}\nonumber\\
=&\sum_{v_{i+1}^{n-1}}W_{m}^{(n-1)}\set{v_0^{n-1}|y_0^{n-1}}
=\sum_{v_{i+1}^{n-1}}\prod_{j=0}^{n-1}W\set{(v_0^{n-1}A_{m})_j|y_j}
\label{mTotalProb}
\end{align}

 At phase $i$ the SC decoder  makes decision 
\begin{equation}
\label{mSCDecisionRule}
\widehat{u}_i=\begin{cases} \arg \max_{v_i\in\F_2} W_{m}^{(i)}\set{\widehat u_0^{i-1}.v_i|y_0^{n-1}},& i\not \in \mathcal F \\\sum_{s=0}^{i-1}V_{j_i,s}\widehat u_s,&\text{otherwise},\end{cases}
\end{equation}
where $a.b$ denotes a vector obtained by appending $b$ to $a$.  
The probabilities 
used in this expression can be recursively computed as 
\begin{align}
\label{mSCProb1}
\begin{split}
\lefteqn{W_{\lambda}^{(2i)}\set{{v}_0^{2i}|y_0^{2^{\lambda}-1}}=}&\\
&\displaystyle\sum_{v_{2i+1}} W_{\lambda-1}^{(i)}
\set{{v}_{0,e}^{2i+1} \oplus { v}_{0,o}^{2i+1}|y_0^{2^{\lambda-1}-1}}W_{\lambda-1}^{(i)}\set{ v_{0,o}^{2i+1} |y_{2^{\lambda-1}}^{2^{\lambda}-1}}
\end{split}
\\
\label{mSCProb2}
\begin{split}
\lefteqn{W_{\lambda}^{(2i+1)}\set{{v}_0^{2i+1}|y_0^{2^{\lambda}-1}}=}&\\
& W_{\lambda-1}^{(i)}
\set{{v}_{0,e}^{2i+1} \oplus { v}_{0,o}^{2i+1}|y_0^{2^{\lambda-1}-1}}W_{\lambda-1}^{(i)}\set{ v_{0,o}^{2i+1} |y_{2^{\lambda-1}}^{2^{\lambda}-1}},
\end{split}
\end{align}
where $ 0< \lambda\leq m,$  $a_{0,e}^t$ and $a_{0,o}^t$ denote subvectors of $a_0^t$, consisting of elements with even and odd indices, respectively.
It is convenient to implement these calculations in LLR domain as
\begin{align}
\label{mLLRDecoder1}
{L_{\lambda}^{(2i)}(v_0^{2i-1}|y_0^{2^{\lambda}-1})=}&\nonumber\\=2\tanh^{-1}&\left(\tanh\left(\frac{L_{\lambda-1}^{(i)}(v_{0,e}^{2i-1} \oplus { v}_{0,o}^{2i-1}|y_0^{2^{\lambda-1}-1})}{2}\right)\right.\nonumber\\
&\times\left.\tanh \left(\frac{L_{\lambda-1}^{(i)}(v_{0,o}^{2i-1}|y_{2^{\lambda-1}}^{2^{\lambda}-1})}{2}\right)\right), \\
\label{mLLRDecoder2}
L_{\lambda}^{(2i+1)}(v_0^{2i}|y_0^{2^{\lambda}-1})=&(-1)^{u_{2i}}L_{\lambda-1}^{(i)}\left({v}_{0,e}^{2i-1} \oplus { v}_{0,o}^{2i-1}|y_0^{2^{\lambda-1}-1}\right)\nonumber\\&+L_{\lambda-1}^{(i)}\left({ v}_{0,o}^{2i-1}|y_{2^{\lambda-1}}^{2^{\lambda}-1}\right),
\end{align}
where $L_{\lambda}^{(i)}(v_0^{i-1}|y_0^{2^{\lambda}-1})=\log\frac{W_\lambda^{(i)}\set{v_0^{i-1}.0|y_0^{2^{\lambda}-1}}}{W_\lambda^{(i)}\set{v_0^{i-1}.1|y_0^{2^{\lambda}-1}}}$, $0\leq i<n,0\leq \lambda<m$, so that the decision rule for $i\notin \mF$ becomes $$\widehat u_i=\begin{cases} 0,&L_{m}^{(i)}(\widehat u_0^{i-1}|y_0^{n-1})>0\\
1,&\text{otherwise.}\end{cases}$$

\subsection{Improved decoding algorithms}
The original SC decoding algorithm does not provide maximum likelihood decoding. A successive cancellation list (SCL) decoding algorithm was suggested in \cite{tal2015list},
and shown to achieve substantially better performance with complexity $O(Ln\log n)$. However, large values of $L$ are needed in order to implement near-ML\ decoding of polar subcodes and polar codes with CRC. This makes practical implementations of such list decoders very challenging. 

On the other hand, one does not need in practice to obtain a list of codewords, but just a single most probable one. The original Tal-Vardy algorithm for polar codes with CRC examines the elements in the obtained list, and discards those with invalid checksums. This algorithm can be easily tailored to process the dynamic freezing constraints \eqref{mDynFrozen} before the decoder reaches the last phase, so that the output list contains only valid codewords of a polar subcode. However, even in this case one has to discard  $L-1$ codewords from the obtained list, so most of the computational work performed by the Tal-Vardy decoder is just wasted. 

This problem was addressed in \cite{niu2012stack}, where a generalization of the stack algorithm to the case of polar codes was suggested. It provides lower average decoding complexity compared to the Tal-Vardy algorithm. In this paper we revise the stack decoding algorithm for polar (sub)codes, and show that its complexity can be substantially reduced.

\section{Path score}
\label{sSeqDec}
\subsection{Stack decoding algorithm}
\label{sSeqDecAlg}
Let $u_0^{n-1}$ be the information vector used by the transmitter. Given a received noisy vector $y_0^{n-1}$, the proposed decoding algorithm constructs sequentially a number of partial candidate information vectors $v_0^{\phi-1}\in \F_2^\phi, \phi\leq n$, evaluates how close their continuations $v_0^{n-1}$ may be to the received sequence, and eventually produces a single codeword, being a solution of the decoding problem.

It is convenient to represent the set of information vectors as a tree. The nodes of the tree correspond to  vectors $v_0^{\phi-1}, 0\leq \phi<n$, satisfying \eqref{mDynFrozen}. At depth $\phi$, each node $v_0^{\phi-1}$ has  two children $v_0^{\phi-1}.0$ and $v_0^{\phi-1}.1$.  The root of the tree corresponds to an empty vector.
 By abuse of notation, the path from the root of the tree to a node $v_0^{\phi-1}$ is also denoted by $v_0^{\phi-1}$.
A decoding algorithm for a polar (sub)code  needs to consider only valid paths, i.e. the paths satisfying constraints \eqref{mDynFrozen}.

The stack  decoding algorithm  \cite{Zigangirov1966some,johannesson1998fundamentals,niu2012stack,miloslavskaya2014sequential}
employs a  priority queue\footnote{A PQ is commonly called "stack" in the sequential decoding literature. However, the implementation of the considered algorithm relies on Tal-Vardy data structures \cite{tal2015list}, which make use of the true stacks. Therefore, we employ the standard terminology of computer science.} (PQ) to store paths together with their scores.
A PQ is a data structure, which contains tuples $(M,v_0^{\phi-1})$, where $M=M(v_0^{\phi-1},y_0^{n-1})$ is the score of path $v_0^{\phi-1}$,  and provides efficient algorithms for the following operations \cite{Cormen2001introduction}:
\begin{itemize}
\item push a tuple into the PQ;
\item pop  a tuple $(M,v_0^{\phi-1})$ (or just $v_0^{\phi-1}$) with the highest $M$;
\item remove  a given tuple from the PQ.
\end{itemize}
We assume here that the PQ may contain at most $D$ elements.

In the context of polar codes, the stack decoding algorithm operates as follows:
\begin{enumerate}
\item Push into the PQ the root of the tree with score $0$. Let $t_0^{n-1}=0$.
\item Extract from the PQ a path $v_0^{\phi-1}$ with the highest score. Let $t_\phi\gets t_\phi+1$.
\item If $\phi=n$, return codeword $v_0^{n-1}A_{m}$ and terminate the algorithm.
\item If  the number of valid children of path $v_0^{\phi-1}$ exceeds the amount of free space in the PQ, remove from it the element with the smallest score.
\item Compute the scores $M(v_0^\phi,y_0^{n-1})$ of  valid children $v_0^\phi$ of the extracted path, and push them into the PQ. 
\item If $t_{\phi}\geq L$, remove from PQ all paths $v_0^{j-1}, j\leq \phi$.
\item Go to step 2.
\end{enumerate}
In what follows, one iteration means one pass of the above algorithm over steps 2--7. Variables $t_\phi$ are used to ensure that the worst-case complexity of the algorithm does not exceed that of a list SC decoder with list size $L$.

The parameter $L$ has the same impact on the performance of the decoding algorithm as the list size in the Tal-Vardy list decoding algorithm, since it imposes an upper bound on number of paths $t_\phi$ considered by the decoder at each phase $\phi$. Step 6 ensures that the algorithm terminates in at most $Ln$ iterations. This is also an upper bound on the number of entries stored in the PQ. However, the algorithm can work with PQ of much smaller size $D$. Step 4 ensures that this size is never exceeded.
\subsection{Score function}
There are many possible ways to define a score function for sequential decoding. In general, this should be done so that one can perform meaningful comparison of paths $v_0^{\phi-1}$ of different length $\phi$. The classical Fano metric for sequential decoding of convolutional codes is given by $$P\set{\mathcal M|y_0^{n-1}}=\frac{P\set{\mathcal M,y_0^{n-1}}}{\prod_{i=0}^{n-1}W(y_i)},$$ where $\mathcal M$ is a variable-length message (i.e. a path in the code tree), and $W(y_i)$ is the probability measure induced on the channel output
alphabet when the channel inputs follow some prescribed (e.g. uniform) distribution \cite{massey1972variable}. In the context of polar codes, a straightforward implementation of this approach would correspond to score function $$M_1(v_0^{\phi-1},y_0^{n-1})=\log W_m^{(\phi-1)}\set{v_0^{\phi-1}|y_0^{n-1}}.$$
This is exactly the score function used in \cite{niu2012stack}.  However, there are several shortcomings in such definition:
\begin{enumerate}
\item Although the value of the score does depend on all $y_i, 0\leq i<n$, it does not take into account freezing constraints on symbols $u_i, i\in \mF,i\geq \phi$. As a result, there may exist  incorrect paths $v_0^{\phi-1}\neq u_0^{\phi-1}$, which have many low-probability continuations $v_0^{n-1},v_\phi^{n-1}\in \F_2^{n-\phi}$, such that the probability $$W_m^{(\phi-1)}\set{v_0^{\phi-1}|y_0^{n-1}}=\sum_{v_\phi^{n-1}}W_m^{(n-1)}\set{v_0^{n-1}|y_0^{n-1}}$$
becomes quite high, and the stack decoder is forced to expand such a path. Note that this is not a problem in the case of convolutional codes, where the decoder may recover after an error burst, i.e. obtain a codeword, which is identical to the transmitted one, except for a few closely located symbols. \item Due to freezing constraints, not all vectors $v_0^{\phi-1}\in\F_2^\phi$ correspond to valid paths in the code tree. This does not allow one to fairly compare the probabilities of paths of different lengths, which include different number of frozen symbols.
\item  Computing probabilities $W_m^{(\phi-1)}\set{v_0^{\phi-1}|y_0^{n-1}}$ involves expensive multiplications and is prone to numeric errors. 
\end{enumerate}

The first of the above problems can be addressed by considering only the most probable continuation of path $v_0^{\phi-1}$, i.e. the score function can be defined as
\begin{equation}
\label{mSubtreeMax}
M_2(v_0^{\phi-1},y_0^{n-1})=\max_{v_\phi^{n-1}\in \F_2^{n-\phi}}\log W_m^{(n-1)}\set{v_0^{n-1}|y_0^{n-1}}.
\end{equation}
Observe that maximization is performed over last $n-\phi$ elements of vector $v_0^{n-1}$, while the remaining ones are given by $v_0^{\phi-1}$.
 Let us further define $$\mathbf V(v_0^{\phi-1},y_0^{n-1})=\arg\max_{\substack{w_0^{n-1}\in \F_2^{n}\\w_0^{\phi-1}=v_0^{\phi-1}}}\log W_m^{(n-1)}\set{w_0^{n-1}|y_0^{n-1}},$$
i.e. $M_2(v_0^{\phi-1},y_0^{n-1})=\log W_m^{(n-1)}\set{\mathbf V(v_0^{\phi-1},y_0^{n-1})|y_0^{n-1}}.$
 
As it will be shown below, employing such score function already provides significant reduction of the average number of iterations at the expense of a negligible performance degradation. Furthermore, it turns out that this score  is exactly equal to the one used in the min-sum version of the Tal-Vardy list decoding algorithm \cite{balatsoukasstimming2015llrbased}, i.e. it can be computed in a very simple way. 

To address the second problem, we need to evaluate the probabilities of vectors $v_0^{n-1}$ under freezing conditions. To do this,  consider the set of valid length-$\phi$ prefixes of the input vectors of the polarizing transformation, i.e.   $$C(\phi)=\set{v_0^{\phi-1}\in \F_2^n|v_i=\sum_{s=0}^{i-1}V_{j_i,s}v_s,i\in \mF,0\leq i<\phi}.$$ 
Let us further define the set of their most likely continuations, i.e. $$\overline  C(\phi)=\set{\mathbf V(v_0^{\phi-1},y_0^{n-1})|v_0^{\phi-1}\in C(\phi)}.$$

For any $v_0^{n-1}\in \overline C(\phi)$ the probability of transmission of  $v_0^{n-1}A_m$, under condition  of $v_0^{\phi-1}\in  C(\phi)$ and   given the received vector $y_0^{n-1}$, equals 
\begin{align*}
{\mathbb W\set{v_0^{n-1}|y_0^{n-1},C(\phi)}=}
\frac{W_m^{(n-1)}\set{U_0^{n-1}=v_0^{n-1}| y_0^{n-1}}}{W_m^{(n-1)}\set{U_0^{n-1}\in \overline C(\phi)| y_0^{n-1}}}.
\end{align*}

An ideal score function could be defined as $$\mathbb M(v_0^{\phi-1},y_0^{n-1})=\log \mathbb W\set{\mathbf V(v_0^{\phi-1},y_0^{n-1})|y_0^{n-1},C(\phi)}.$$
Observe that this function is defined only for vectors $v_0^{\phi-1}\in C(\phi)$, i.e. those satisfying freezing constraints up to phase $\phi$.

Unfortunately, there is no simple and obvious way to compute $\pi(\phi,y_0^{n-1})=W_m^{(n-1)}\set{U_0^{n-1}\in \overline C(\phi)|y_0^{n-1}}$. Therefore, we have to develop an approximation. 

  It can be seen that 
\begin{align}
\label{mCodeProb}
\pi(\phi,y_0^{n-1})=&   W_m^{(n-1)}\set{\mathbf V(u_0^{\phi-1})| y_0^{n-1}}+\nonumber\\&\underbrace{\sum_{\substack{v_0^{\phi-1}\in C(\phi) \\v_0^{\phi-1}\neq{u_0^{\phi-1}}}}W_m^{(n-1)}\set{\mathbf V(v_0^{\phi-1})|y_0^{n-1}}}_{\mu(u_0^{\phi-1},y_0^{n-1})}.
\end{align}
Observe that $p=\E{\mathbf Y}{\frac{\mu(u_0^{n-1},\mathbf Y)}{\pi(\phi,\mathbf Y)}}$ is the probability of the min-sum version of the Tal-Vardy list decoding algorithm with infinite list size not obtaining $u_0^{\phi-1}$ as the most probable path at phase $\phi$.  We consider decoding of polar (sub)codes, which are constructed to have low list SC decoding error probability even for small list size in the considered channel $W(y|c)$. Hence, it can be assumed that $p\ll 1$. This implies that with high probability $\mu(u_0^{\phi-1},y_0^{n-1})\ll W_m^{(\phi-1)}\set{\mathbf V(u_0^{\phi-1})| y_0^{n-1}}$, i.e. $\pi(\phi,y_0^{n-1})\approx    W_m^{(\phi-1)}\set{\mathbf V(u_0^{\phi-1})| y_0^{n-1}}$. 

   However, a real decoder cannot compute this value, since the transmitted vector $u_0^{n-1}$ is not available at the receiver side.  Hence, we  propose to further approximate the logarithm of the first term in \eqref{mCodeProb} with its expected value over $\mathbf Y$, i.e.
$$\log \pi(\phi,y_0^{n-1})\approx \Psi(\phi)=\E{\mathbf Y}{\log W_m^{(\phi-1)}\set{\mathbf V(u_0^{\phi-1})|\mathbf Y}}$$
Observe that this value depends only on $\phi$ and underlying channel $W(y|c)$, and can be pre-computed offline.

Hence, instead of the ideal score function $\mathbb M(v_0^{\phi-1},y_0^{n-1})$ we propose to use an approximate one 
\begin{equation}
\label{mPathScore}
M_3(v_0^{\phi-1},y_0^{n-1})=M_2(v_0^{\phi-1},y_0^{n-1})-\Psi(\phi).
\end{equation}

\subsection{Computing the score function}
Consider  computing  $$R_m^{(\phi-1)}(v_0^{\phi-1},y_0^{n-1})=M_2(v_0^{\phi-1},y_0^{n-1}).$$  
Let  the modified log-likelihood ratios be defined as
\begin{equation}
\label{mLLRS}
S_m^{(\phi)}(v_0^{\phi-1}|y_0^{n-1})=R_m^{(\phi)}(v_0^{\phi-1}.0,y_0^{n-1})-R_m^{(\phi)}(v_0^{\phi-1}.1,y_0^{n-1}).
\end{equation}
It can be seen that
\begin{align}
R_m^{(\phi)}(v_0^{\phi},y_0^{n-1})=&
R_m^{(\phi-1)}(v_0^{\phi-1},y_0^{n-1})\nonumber\\&+\tau(S_m^{(\phi)}(v_0^{\phi-1}|y_0^{n-1}),v_\phi), \label{mLogProb}
\end{align}
where 
$$\tau(S,v)=\begin{cases}
0,&\text{if $\sgn(S)=(-1)^v$}\\
-|S|,&\text{otherwise.}
\end{cases}$$
is the penalty function.

Indeed, let $\tilde v_0^{n-1}=\mathbf V(v_0^{\phi-1})$. If $v_\phi=\tilde v_\phi$, then the most probable continuations of  $v_0^{\phi-1}$ and $v_0^{\phi}$ are identical.
Otherwise, $-\left|S_m^{(\phi)}(v_0^{\phi-1}|y_0^{n-1})\right|$ is exactly the difference between the log-probability of the most likely continuations of $v_0^{\phi-1}$ and $v_0^{\phi}$. 

The initial value for recursion \eqref{mLogProb} is given by $$R_m^{(-1)}(y_0^{n-1})=\log \prod_{i=0}^{n-1}W\set{C=\hat c_i|Y=y_i},$$
where $\hat c_i$ is the hard decision  corresponding to $y_i$. However, this value can be replaced with $0$, since it does not affect the selection of paths in the stack algorithm.

In order to obtain a simple expression for the proposed  score function, observe that   
\begin{align*}
W_m^{(n-1)}\set{ v_0^{n-1}|y_0^{n-1}}=&W_{m-1}^{(n/2-1)}\set{v_{0,e}^{n-1}\oplus v_{0,o}^{n-1}|y_0^{\frac{n}{2}-1}}\cdot\\& W_{m-1}^{(n/2-1)}\set{ v_{0,o}^{n-1}|y_{\frac{n}{2}}^{n-1}}.
\end{align*}
Since $R_m^{(\phi-1)}(v_0^{\phi-1},y_0^{n-1})$ is obtained by maximization of $W_m^{(n-1)}\set{ v_0^{n-1}|y_0^{n-1}}$ over $v_\phi^{n-1}$, it can be seen that 
\begin{align*} 
\lefteqn{R_{\lambda}^{(2i)}(v_0^{2i},y_0^{N-1})=}&\nonumber\\
&\max_{v_{2i+1}} \left(R_{\lambda-1}^{(i)}\left(v_{0,e}^{2i+1}\oplus v_{0,o}^{2i+1},y_0^{\frac{N}{2}-1}\right)+R_{\lambda-1}^{(i)}\left(v_{0,o}^{2i+1},y_{\frac{N}{2}}^{N-1}\right)\right),\\
\lefteqn{R_{\lambda}^{(2i+1)}(v_0^{2i+1}|y_0^{N-1})=}&\nonumber\\
&\quad \quad \quad R_{\lambda-1}^{(i)}\left(v_{0,e}^{2i+1}\oplus v_{0,o}^{2i+1},y_0^{\frac{N}{2}-1}\right)+R_{\lambda-1}^{(i)}\left(v_{0,o}^{2i+1},y_{\frac{N}{2}}^{N-1}\right),
\end{align*}
where $N=2^\lambda,$ $0< \lambda\leq m$, and initial values for these recursive expressions are given by $R_0^{(0)}(b,y_j)=\log W_0^{(0)}\set{b|y_j}$, $b\in\set{0,1}$. 
From \eqref{mLLRS} one obtains 
\begin{align*}
S_{\lambda}^{(2i)}(v_0^{2i-1}|y_0^{2^{\lambda}-1})=&\max(J(0)+K(0),J(1)+K(1))-\nonumber\\&\max(J(1)+K(0),J(0)+K(1))\nonumber\\=
&\max(J(0)-J(1)+K(0)-K(1),0)-\nonumber\\&\max(K(0)-K(1),J(0)-J(1))\\
S_{\lambda}^{(2i+1)}(v_0^{2i}|y_0^{2^{\lambda}-1})=&J(v_{2i})+K(0)-J(v_{2i}+1)-K(1)
\end{align*}
where $J(c)=R_{\lambda-1}^{(i)}((v_{0,e}^{2i-1}\oplus v_{0,o}^{2i-1}).c|y_0^{2^{\lambda-1}-1})$, $K(c)=R_{\lambda-1}^{(i)}(v_{0,o}^{2i-1}.c|y_{2^{\lambda-1}}^{2^{\lambda}-1})$. Observe that $$J(0)-J(1)=a=S_{\lambda-1}^{(i)}(v_{0,e}^{2i-1}\oplus v_{0,o}^{2i-1}|y_0^{2^{\lambda-1}-1})$$ and $$K(0)-K(1)=b=S_{\lambda-1}^{(i)}(v_{0,o}^{2i-1}|y_{2^{\lambda-1}}^{2^\lambda-1})$$

It can be obtained from these expressions that the modified log-likelihood ratios are given by 
\begin{align}
\label{mMinSum1}
S_{\lambda}^{(2i)}(v_0^{2i-1}|y_0^{2^\lambda-1})=&Q(a,b)=\sgn (a)\sgn (b)\min(|a|,|b|),\\
\label{mMinSum2}
S_{\lambda}^{(2i+1)}(v_0^{2i}|y_0^{2^\lambda-1})=&P(v_{2i},a,b)=(-1)^{v_{2i}}a+b.
\end{align}

The initial values for this recursion are given by $S_0^{(0)}(y_i)=\log\frac{W\set{0|y_i}}{W\set{1|y_i}}$.

These expressions can be readily recognized as the min-sum approximation for \eqref{mLLRDecoder1}--\eqref{mLLRDecoder2}, and \eqref{mLogProb} coincides with an approximation for $M_1(v_0^\phi,y_0^{n-1})$  \cite{leroux2013semiparallel,leroux2011hardware,balatsoukasstimming2015llrbased}. However, these
are also the exact values, which reflect the probability of the most likely continuation of a given path $v_0^{\phi-1}$ in the code tree. 

 Finally, we illustrate the  meaning of  $M_{2}(v_0^{n-1},y_0^{n-1})$. Let $S_0^{n-1}$ be the LLR vector corresponding to the received noisy sequence $y_0^{n-1}$.  Let $$E(c_0^{n-1},S_0^{n-1})=-\sum_{i=0}^{n-1}\tau(S_i,c_i)$$
 be the ellipsoidal weight (also known as correlation discrepancy) of vector $c_0^{n-1}\in \F_2^n$   \cite{Valembois2004box,moorthy1997softdecision}. It is possible to show that the ML decoding problem for the case of transmission of codewords of a code $\mathcal C$ over a memoryless channel can be formulated as $$\hat u_0^{n-1}=\arg \min_{c_0^{n-1}\in \mathcal C}E(c_0^{n-1},S_0^{n-1})$$
\begin{lemma}
\label{lWeightSum}
For any  $c_0^{2n-1}\in\F_2^{2n-1}$
$$E(c_0^{2n-1},S_0^{2n-1})=E(c_{0,e}^{2n-1}+c_{0,o}^{2n-1},\tilde S_0^{n-1})+E(c_{0,o}^{n-1},\overline S_0^{n-1}),$$
where $\tilde S_i= Q(S_{2i},S_{2i+1})$, $\overline S_i=P(c_{2i},S_{2i},S_{2i+1})$.
\end{lemma}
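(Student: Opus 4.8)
The plan is to exploit the fact that the ellipsoidal weight is a sum of independent per-coordinate penalties, so that the asserted identity decouples across the $n$ butterflies and reduces to the base case. Writing $G(S,c)=-\tau(S,c)$, one has $E(c_0^{2n-1},S_0^{2n-1})=\sum_{i=0}^{n-1}\bigl(G(S_{2i},c_{2i})+G(S_{2i+1},c_{2i+1})\bigr)$, while the two terms on the right-hand side contribute $\sum_{i}G(\tilde S_i,c_{2i}\oplus c_{2i+1})$ and $\sum_{i}G(\overline S_i,c_{2i+1})$ respectively. Matching the summands index by index, it therefore suffices to prove, for a single pair with $a=S_{2i}$ and $b=S_{2i+1}$, the scalar identity
$$G(a,c_{2i})+G(b,c_{2i+1})=G(\tilde S_i,c_{2i}\oplus c_{2i+1})+G(\overline S_i,c_{2i+1}),$$
where $\tilde S_i=Q(a,b)$ and $\overline S_i$ is the variable-node update for the lower bit.

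First I would recast each penalty in the max-log form underlying \eqref{mLogProb}. If $R_0(\cdot),R_1(\cdot)$ denote the two leaf log-likelihoods of the pair, with $R_0(0)-R_0(1)=a$ and $R_1(0)-R_1(1)=b$, then at the leaf level $G(a,c_{2i})=\max_{\beta}R_0(\beta)-R_0(c_{2i})$, and similarly for $b$; note $G$ depends on the $R_j$ only through these differences, so their additive constants are immaterial. I then introduce the synthesized log-likelihoods $\tilde R(w)=\max_{\beta_0\oplus\beta_1=w}\bigl(R_0(\beta_0)+R_1(\beta_1)\bigr)$ for the upper bit and $\overline R(w)=R_0(v_0\oplus w)+R_1(w)$ for the lower bit, where $v_0=c_{2i}\oplus c_{2i+1}$ is the recovered upper bit. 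A direct computation of $\tilde R(0)-\tilde R(1)$ and $\overline R(0)-\overline R(1)$ recovers $Q(a,b)$ and $(-1)^{v_0}a+b=P(v_0,a,b)$, i.e. the min-sum rules \eqref{mMinSum1}--\eqref{mMinSum2}; in particular $\overline S_i$ is conditioned on $v_0$.

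The heart of the argument is then a telescoping of the two maximizations. Writing the right-hand side as $\bigl(\max_w\tilde R(w)-\tilde R(v_0)\bigr)+\bigl(\max_w\overline R(w)-\overline R(c_{2i+1})\bigr)$, I would establish three elementary facts: (i) $\max_w\tilde R(w)=\max_{\beta}R_0(\beta)+\max_{\beta}R_1(\beta)$, since maximizing the check-node score over the upper bit removes the parity constraint entirely; (ii) $\max_w\overline R(w)=\tilde R(v_0)$, because both equal $\max_{\beta_1}\bigl(R_0(v_0\oplus\beta_1)+R_1(\beta_1)\bigr)$ after the substitution $\beta_1=w$; and (iii) $\overline R(c_{2i+1})=R_0(v_0\oplus c_{2i+1})+R_1(c_{2i+1})=R_0(c_{2i})+R_1(c_{2i+1})$. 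Substituting (i)--(iii), the two copies of $\tilde R(v_0)$ cancel and the right-hand side collapses to $\max_{\beta}R_0(\beta)+\max_{\beta}R_1(\beta)-R_0(c_{2i})-R_1(c_{2i+1})$, which is exactly the left-hand side.

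The main obstacle, and the only step that is not pure bookkeeping, is fact (ii): it states that the best completion of the pair consistent with the recovered upper bit coincides with the check-node value at that bit, which is precisely what makes the two-stage (upper-then-lower) decomposition conserve the total ellipsoidal weight. Its validity hinges on $\overline S_i$ being conditioned on the parity $c_{2i}\oplus c_{2i+1}$ rather than on $c_{2i}$ alone, consistently with the factor $(-1)^{v_{2i}}$ in \eqref{mMinSum2}. As a sanity check one could instead run a finite case analysis over the four bit patterns and the signs of $a,b$ using $Q(a,b)=\sgn(a)\sgn(b)\min(|a|,|b|)$ and $|P(v_0,a,b)|=|(-1)^{v_0}a+b|$, but the max-log telescoping above avoids the case split and makes transparent why the identity holds.
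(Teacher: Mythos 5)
Your proof is correct, and it takes a genuinely different route from the paper's. Both arguments begin identically: since $E(\cdot,\cdot)$ is a sum of per-coordinate penalties and both sides of the claimed identity decompose over the $n$ butterflies, it suffices to treat a single pair, i.e.\ the case $n=1$. The paper then disposes of that case by exhaustive verification over $c_0,c_1\in\F_2$ (and implicitly the sign patterns of $S_0,S_1$) against the explicit formulas $Q(a,b)=\sgn(a)\sgn(b)\min(|a|,|b|)$ and $P(v,a,b)=(-1)^va+b$. Your max-log telescoping replaces that case split with a structural argument: writing each penalty as $\max_wR(w)-R(c)$, your facts (i)--(iii) are all immediate, the two copies of $\tilde R(v_0)$ cancel, and the identity follows without ever touching the closed forms of $Q$ and $P$. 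What this buys is an explanation of \emph{why} the two-stage (upper-then-lower) decomposition conserves the ellipsoidal weight, and an argument that would survive a change of kernel or alphabet, where brute force becomes unpleasant; what it costs is the small preliminary lemma that the penalty depends on the leaf scores only through their difference. One substantive point in your favour: you are right that the identity requires $\overline S_i$ to be conditioned on the parity $c_{2i}\oplus c_{2i+1}$ (the bit carried by the upper synthesized channel) rather than on $c_{2i}$ alone as the lemma literally states. For example, with $c_0=c_1=1$, $S_0=1$, $S_1=2$ the left-hand side equals $3$, while the right-hand side computed with $P(c_0,S_0,S_1)=1$ gives $0+1=1$; with $P(c_0\oplus c_1,S_0,S_1)=3$ it gives $0+3=3$ as required. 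So you have proved the corrected statement, which is the one the paper actually needs when it applies the lemma recursively to obtain \eqref{mEWScore}; the printed first argument of $P$ appears to be a slip carried over from the $u$-domain indexing of \eqref{mMinSum2}, where the conditioning bit is the butterfly's \emph{input} $v_{2i}$, not its output $c_{2i}$.
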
 
\begin{proof}
It is sufficient to prove the statement for $n=1$. The result follows by examining all possible combinations of $c_0, c_1\in \F_2$.
\end{proof}
Applying Lemma \ref{lWeightSum} recursively and assuming $R_m^{(-1)}(y_0^{n-1})=0$, one obtains
that 
\begin{equation}
\label{mEWScore}
M_{2}(v_0^{n-1},y_0^{n-1})=-E(v_0^{n-1}A_m,\mathbf S),
\end{equation}
where $\mathbf S=(S_0^{(0)}(y_0),\dots,S_0^{(0)}(y_{n-1}))$. This means that the proposed score function reflects the ellipsoidal weight of the transmitted codeword.

\subsection{The bias function}
The function  $\Psi(\phi)$ is equal to the expected value of the logarithm of the probability of a length-$\phi$ part of the correct path, i.e. the path corresponding to the vector $u_0^{n-1}$ used by the encoder. Employing this function enables one to estimate how far a particular path $v_0^{\phi-1}$ has diverted from the expected behaviour of a correct path. The bias function can be computed offline under the assumption of zero codeword transmission. Indeed, in this case the cumulative  density functions $F_{\lambda}^{(i)}(x)$ of $S_\lambda^{(i)}$ are given by \cite{kern2014new}
\begin{align}
&F_{\lambda}^{(2i)}(x)=\begin{cases}
2F_{\lambda-1}^{(i)}(x)(1-F_{\lambda-1}^{(i)}(-x)),&x< 0\\
2F_{\lambda-1}^{(i)}(x)-(F_{\lambda-1}^{(i)}(-x))^2-(F_{\lambda-1}^{(i)}(x))^2,&x\geq 0
\end{cases}\label{mMinSumCumDens}\\
&F_{\lambda}^{(2i+1)}(x)=\int_{-\infty}^\infty F_{\lambda-1}^{(i)}(x-y) dF_{\lambda-1}^{(i)}(y),
\label{mDensityConv}
\end{align}
where $F_0^{(0)}(x)$ is the CDF of the channel output LLRs.
Then one can compute 
\begin{equation}
\label{mBias}
\Psi(\phi)=-\sum_{i=0}^\phi \int_{-\infty}^0F_{m}^{(i)}(x)dx.
\end{equation}

The bias function $\Psi(\phi)$ depends only on $m$ and channel properties, so it can be used for decoding of any polar (sub)code of a given length.

Figure \ref{fHeuristic} illustrates the bias function for the case of BPSK modulation and AWGN channel with different noise standard deviations $\sigma$. 
 
\begin{figure}
\centering\includegraphics[width=0.5\textwidth]{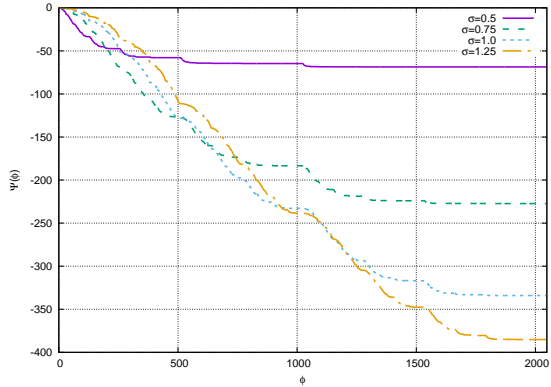}
\caption{The bias function for   AWGN  channel, $m=11.$}
\label{fHeuristic}
\end{figure}

\subsection{Summary of the proposed approach}
We propose to implement decoding of polar (sub)codes by employing the stack algorithm described in Section \ref{sSeqDecAlg}, making use of the score function $M_3(v_0^{\phi-1},y_0^{n-1})$ given by \eqref{mPathScore}. The first component of the score function is given by \eqref{mLogProb}, i.e. it is the accumulated penalty 
for phases $0,\dots,\phi-1$. Phase penalty is equal (up to the sign) to the absolute value of LLR $S_m^{(\phi)}(v_0^{\phi-1}|y_0^{n-1})$, if its sign does not agree with $v_i$, and zero otherwise. The LLRs are recursively computed using the min-sum method, as described in \eqref{mMinSum1}--\eqref{mMinSum2}. The second component of the score function, bias,  is the expected value of the penalty for these phases under the assumption that $v_0^{\phi-1}=u_0^{\phi-1}$, i.e. the considered path is correct. The bias function can be pre-computed offline according to \eqref{mBias} from the transition probability function $W(Y|C)$     of the underlying channel.

The parameters of the algorithm are effective list size $L$ and maximal size of the priority queue $D$, as well as channel transition probability function $W(Y|C)$, which is used to construct the bias function.

Observe that the proposed approach can be used not only with classical polar codes, but also polar subcodes \cite{trifonov2016polar} and polar codes with CRC \cite{tal2015list}. 
\subsection{Complexity analysis}
The algorithm presented in Section \ref{sSeqDecAlg} extracts from the PQ length-$\phi$ paths at most $L$ times. At each iteration it needs to calculate the LLR $S_m^{(\phi)}(v_0^{\phi-1}|y_0^{n-1})$. Intermediate values for these calculations can be reused in the same way as in \cite{tal2015list}. Hence, LLR calculations require at most $O(Ln\log n)$ operations. However, simulation results presented below suggest that the average complexity of the proposed algorithm is substantially lower, and at high SNR approaches $O(n\log n)$, the complexity of the SC algorithm.

\section{Improvements}
\label{sImprovements}
\subsection{List size adaptation}
\label{sLSA}
In the case of infinite size $\Theta$ of the priority queue the performance of the proposed algorithm depends mostly on parameter $L$. Setting $L=\infty$ results in near\footnote{Similarly to the case of sequential decoding of convolutional codes, employing a non-zero bias function may prevent the stack algorithm from providing ML decoding performance. } maximum likelihood decoding at the cost of excessive complexity and memory consumption.
  It was observed in experiments that in most cases
the decoding can be completed successfully with small $L$, 
and only rarely a high value of $L$  is required. 

It was also observed that  if the decoder needs to kill short paths too many times while processing a single noisy vector, this most likely means that the correct path is already killed, and decoding error is unavoidable. 

Therefore, in order to reduce the decoding complexity, we propose to change  list size $L$ adaptively. We propose to start decoding of a noisy vector with small $L$, and increase it, if the decoder is likely to have killed the correct path.  
In order to do this, we propose to keep track of the number of times $\kappa$ the algorithm actually removes some paths at step 6 of the algorithm presented in Section \ref{sSeqDecAlg}. If this value exceeds some threshold $\kappa_0$, then the decoding may need to be restarted with larger $L$, similarly to \cite{li2012adaptive}. 

However, more efficient approach is possible. In order to avoid repeating the same calculations, we propose not to kill paths permanently at step 6, but remove them from the PQ, and save the corresponding pairs $(\widetilde M,l)$ in a temporary array, where $l$ is an identifier of some path $v_0^{\phi-1}$,  $\widetilde M=M_{3}(v_0^{\phi-1},y_0^{n-1})$, and  $M$ is the score of the path extracted at step 2 of the algorithm presented in Section \ref{sSeqDecAlg}. Such paths are referred to as suspended paths. If $\kappa$ exceeds some threshold $\kappa_0$, instead of restarting the decoder, we propose to double list size $L$ and re-introduce into the PQ suspended paths with the score better than the score $M_0$ of the current path. This is performed until $L$ reaches some upper bound $L_{max}$. 
%
%
%
%
The value of $\kappa_0$ should be optimized by simulations.

\subsection{Early termination}
\label{sTermination}
The sequential decoder may kill the correct path at  step 6 of the algorithm presented  in Section \ref{sSeqDecAlg}.
In this case, the decoder cannot return the correct codeword, but  continues to perform useless calculations, inspecting many wrong paths in the code tree.
It is desired to detect quickly such case and abort decoding.

Observe that for a correct path $ u_0^{n-1}$ one has $\E{\mathbf Y}{M_{3}( u_0^{n-1},\mathbf Y)}=0$, and, if the  decoder does not make an error,  for any other path $v_0^{n-1}$ one has $M_3(v_0^{n-1},y_0^{n-1})<M_3(u_0^{n-1},y_0^{n-1})$.

At any phase $\phi$ the probability that  $v_0^{\phi-1}$ is a part of the correct path is an increasing function of  $M_{3}(v_0^{\phi-1},y_0^{n-1})$.
Therefore, we propose to abort decoding if the score $M_3(v_0^{\phi-1},y_0^{n-1})$ of a path $v_0^{\phi-1}$ extracted from the PQ at some iteration is below some threshold $T$. This may increase the decoding error probability. 
The threshold should be selected so that the probability of correct path $ u_0^{\phi-1}$ satisfying the described early termination criterion, i.e.  $M_{3}(u_0^{\phi-1},\mathbf Y)<T$, is sufficiently small. This requires one to study the distribution of $M_{3}( u_0^{\phi-1},\mathbf Y)$ at each phase $\phi$. Since this is a difficult problem, we propose to select $T$ based on the distribution of $\mu=M_{3}(u_0^{n-1},\mathbf Y)$. Namely, we propose to set $T$ to the value of  $p_{MAP}$-quantile of the distribution of  $\mu$, where $p_{MAP}$ is the codeword error probability of the MAP\ decoder.

It follows from \eqref{mEWScore} that $$M_{3}(u_0^{n-1},\mathbf Y)=\sum_{i=0}^{n-1}\left(\tau(S_i,c_i)-\E{Y_i}{\tau(S_i,c_i)}\right),$$
where $S_i=S_0^{(0)}(Y_i)$ and  $c_0^{n-1}= u_0^{n-1}A_m$. Assuming that  zero codeword was transmitted, i.e. $u_0^{n-1}=\mathbf 0$, one can derive the probability density function of  log-likelihood ratios $S_0^{(0)}(Y_i)$, and compute the PDF of $M_3(u_0^{n-1},\mathbf Y)$ as $n$-times convolution of the PDFs of $\tau(S_0^{(0)}(Y_i),0)-\E{Y_i}{\tau(S_0^{(0)}(Y_i),0)}$.

The decoding error probability  $p_{MAP}$ of the MAP\ decoder can be estimated by running simulations using the proposed sequential decoding algorithm with very large  $L$. This enables one to derive the termination threshold $T$, which depends only on channel and code properties. Numeric results suggest that in the case of AWGN\ channel such  threshold function can be well approximated by 
\begin{align}
\label{mThresholdApprox}
T\approx -\min\left\{\frac{a_{\mathcal C}\sigma^2+b_{\mathcal C}}{\sigma^2},\frac{t_C}{\sigma^2}\right\}
\end{align}  
for some parameters $a_{\mathcal C}, b_{\mathcal C}, t_{\mathcal C}$, which depend on the code $\mathcal C$, and can be obtained by curve fitting techniques.

Let  $p_{seq}$ and $p_{T}$ be the codeword error probabilities
of the sequential decoding algorithm presented in Section \ref{sSeqDecAlg}, and the  algorithm, which additionally discards all paths $v_0^{n-1}$ with $M(v_0^{n-1},y_0^{n-1})<T$, respectively. Obviously, $p_{MAP}\leq p_{seq}\leq p_{T}$.

Then one obtains\begin{align*}
& p_{T} = P\set{\mu< T} + P\set{\mathcal E|\mu \geq T} \leq 
   p_{MAP}+p_{seq}\leq 2p_{seq},
\end{align*}
where $\mathcal E$ is the event of sequential decoder error. 

Note that the proposed early termination method may abort the decoding even if $M_{3}( u_0^{n-1},y_0^{n-1})\geq T$, but at some $\phi$ it happens that $M_{3}(u_0^{\phi-1},y_0^{n-1})<T$, i.e. the probability of sequential decoding error with early termination is not less than $p_T$. However, the below presented numeric results suggest that the associated performance loss is negligible.

\section{Numeric results}
\label{sNumRes}
\begin{figure*}[th]
\centering
\includegraphics[width=0.7\textwidth]{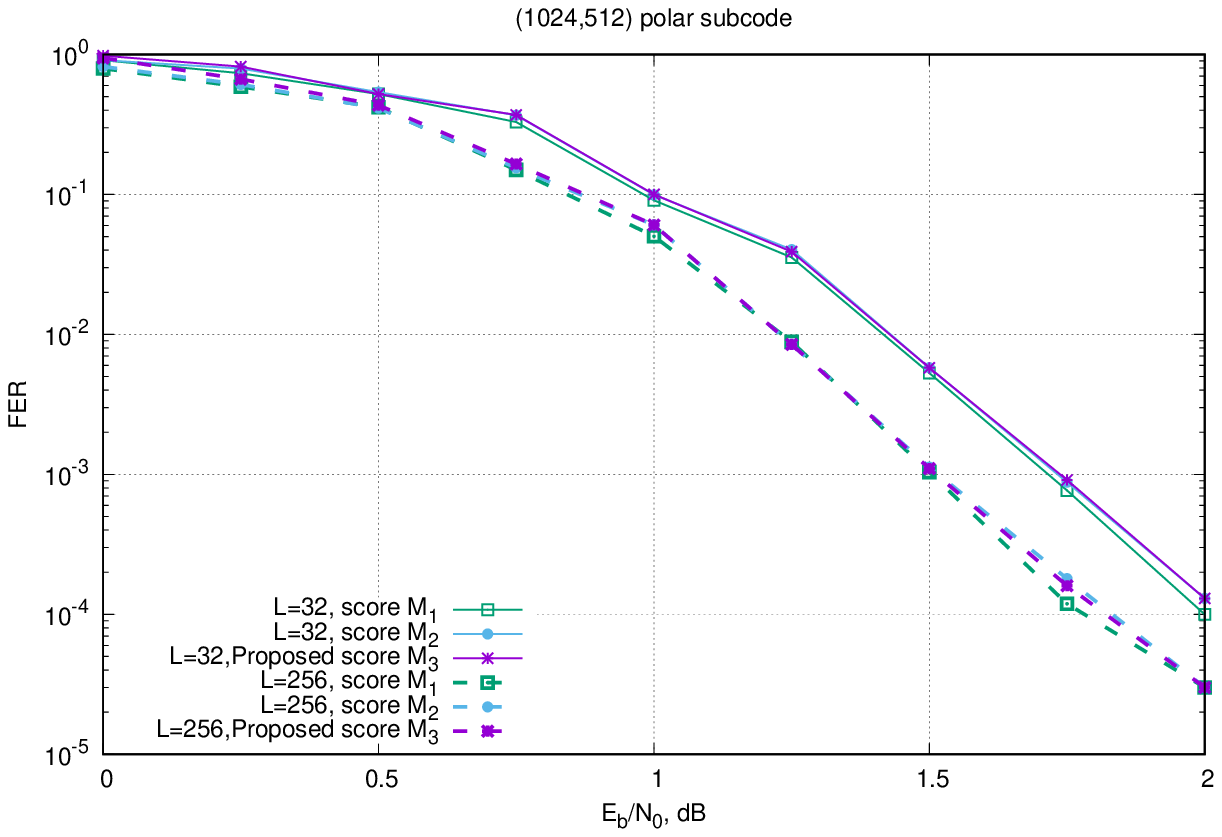}
\includegraphics[width=0.7\textwidth]{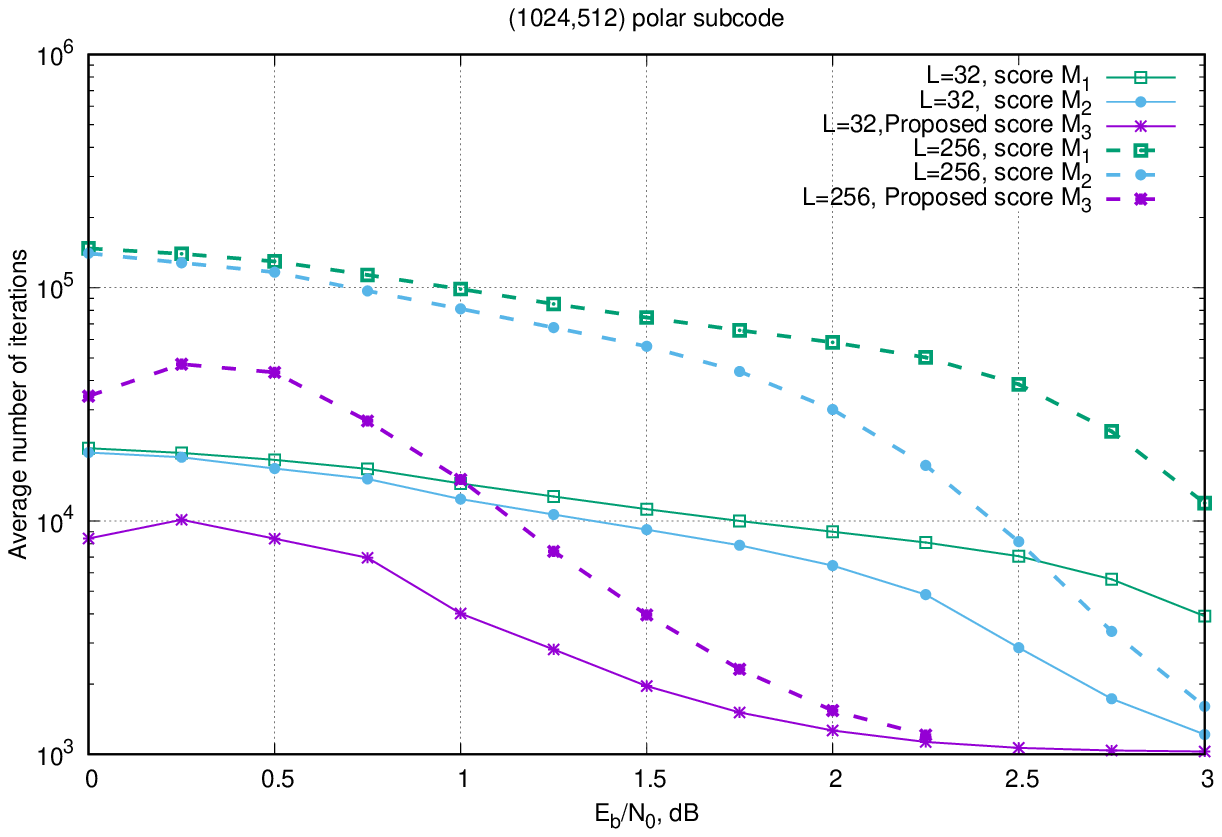}
\caption{The impact of the score function on the decoder performance and complexity. }
\label{fScorePerfComp}
\end{figure*}
\begin{table}
\caption{Average decoding complexity of $(1024,512,28)$ code with $L=32$, $\times 10^3$ operations}
\label{t1024512Compl}\centering
\begin{tabular}{|c|p{0.07\textwidth}|p{0.07\textwidth}|p{0.07\textwidth}|p{0.07\textwidth}|}\hline
\multirow{2}{*}{$E_b/N_0$, dB}&\multicolumn{2}{c|}{Summations}&\multicolumn{2}{c|}{Comparisons}\\\cline{2-5}
&Proposed path score&Path score from \cite{miloslavskaya2014sequential}&Proposed path score&Path score from \cite{miloslavskaya2014sequential}\\\hline
0.5&63.2&133&122.5&218\\\hline
1&34.8&73&55.6&122\\\hline
1.5&16&32&21.9&54\\\hline
2&8.8&18&12.0&31\\\hline
\end{tabular}
\end{table}
\begin{figure*}[th]
\centering
\includegraphics[width=0.8\textwidth]{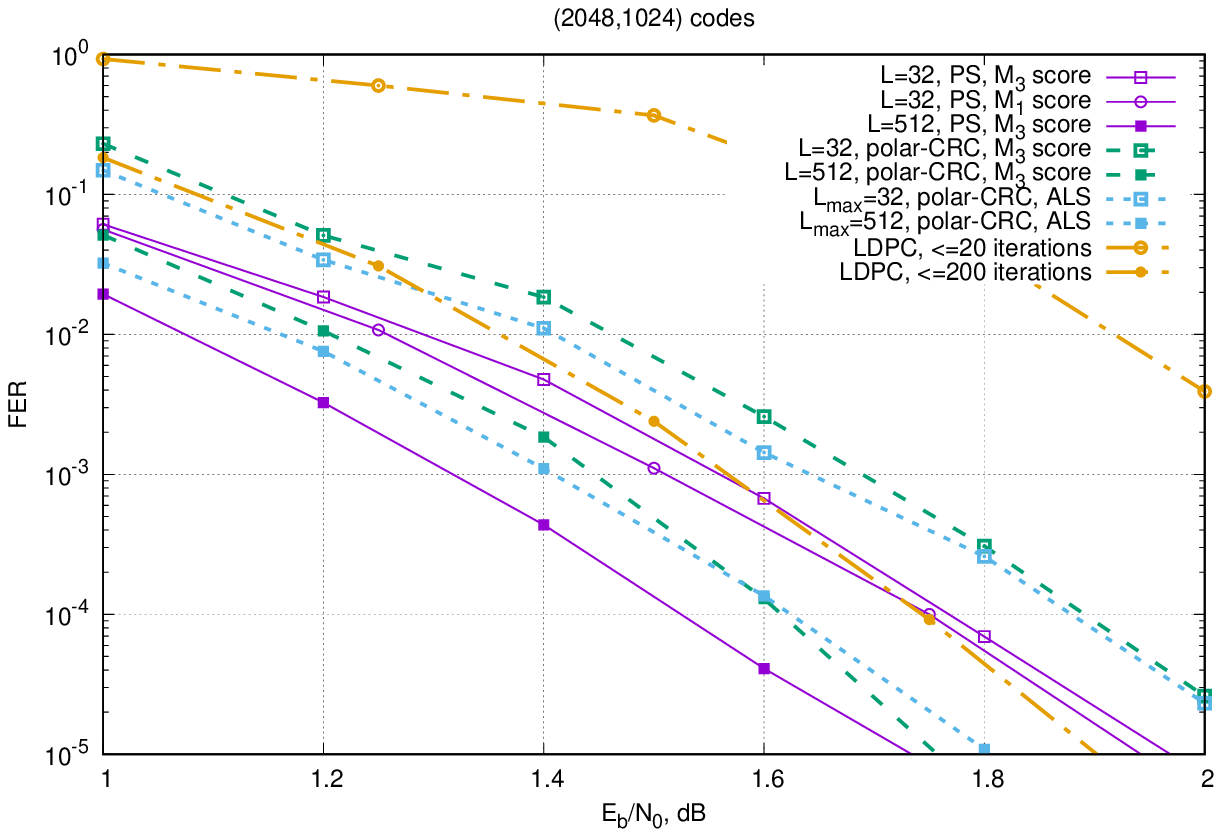} 
\includegraphics[width=0.8\textwidth]{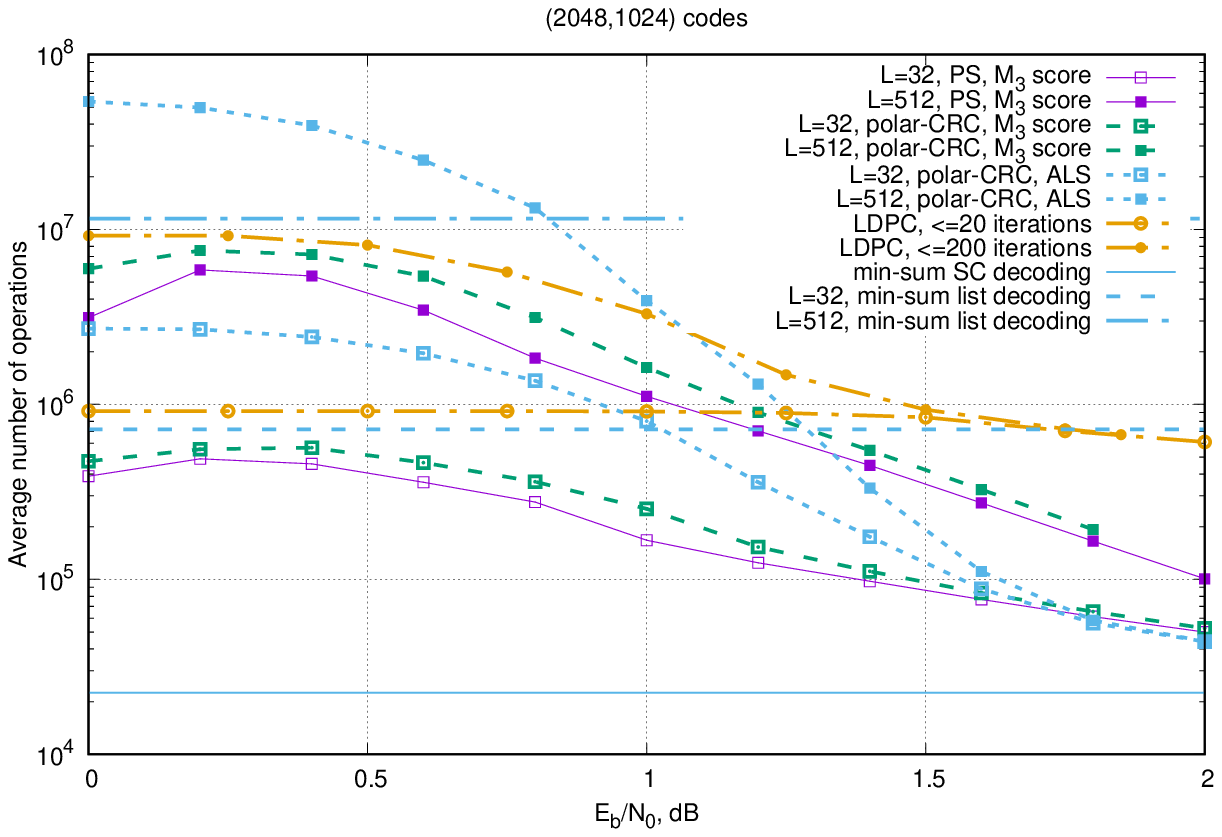}
\caption{Performance  and decoding   complexity  of $(2048,1024)$  codes.   }
\label{f2048}
\end{figure*}

\begin{figure*}[th]
\centering
\includegraphics[width=0.7\textwidth]{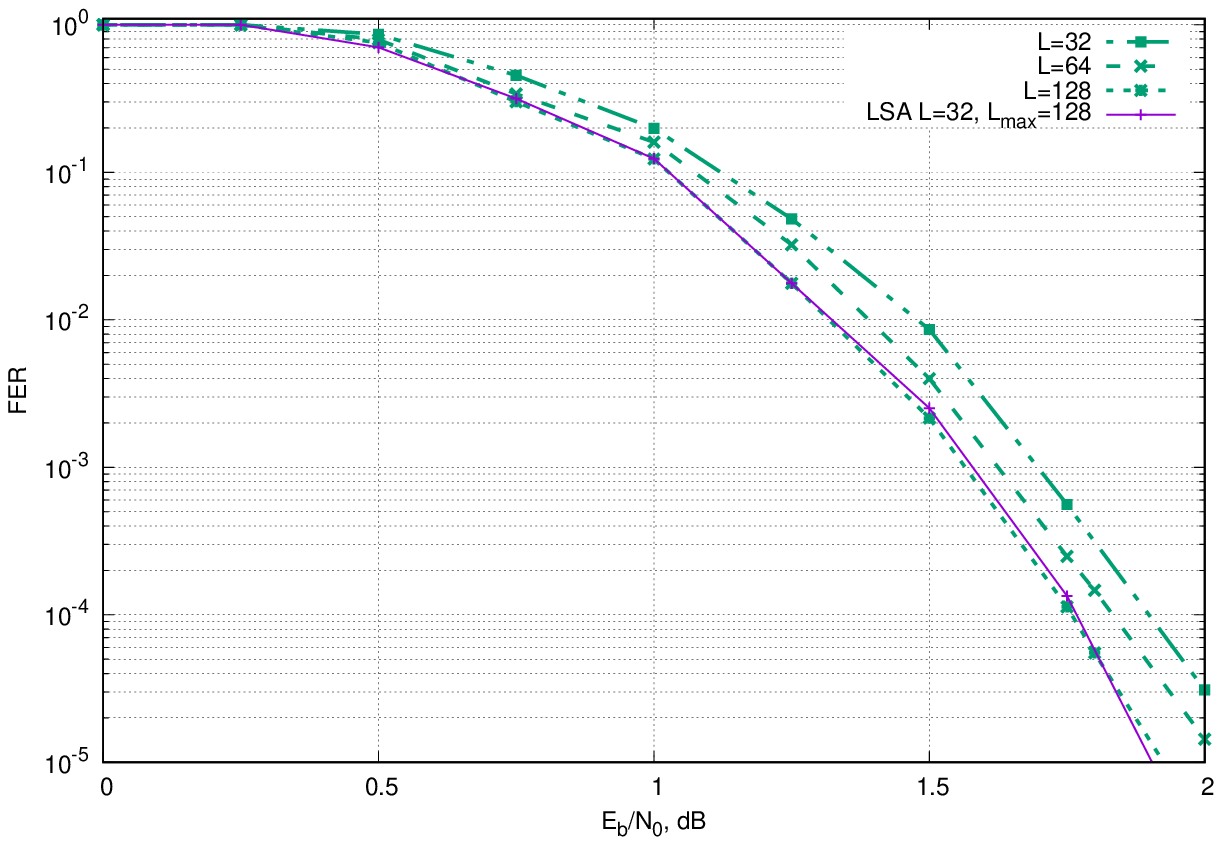}
\includegraphics[width=0.7\textwidth]{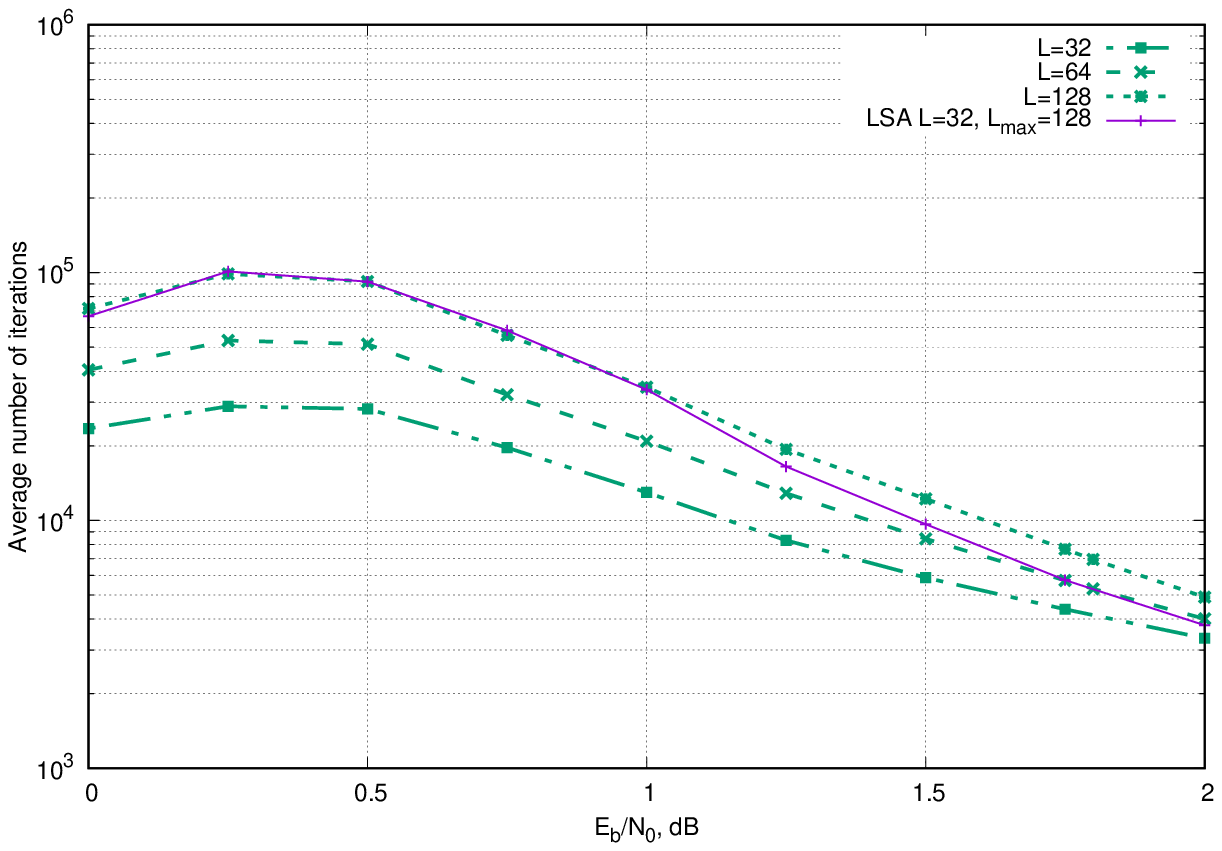}
\caption{Performance and  complexity    of sequential decoding with list  size   adaptation for $(2048,1024)$ polar subcode. }
\label{fALS}
\end{figure*}

The performance and complexity of the proposed decoding algorithm were investigated in the case of  BPSK modulation and AWGN\ channel. The results are reported for polar codes with 16-bit CRC (polar-CRC) and randomized polar subcodes\footnote{In order to ensure reproducibility of the results, we have set up a web site \url{http://dcn.icc.spbstu.ru/index.php?id=polar&L=2} containing the specifications of the considered  polar subcodes.} (PS)  \cite{trifonov2017randomized} . The size of the priority queue was set in all cases to $D=Ln$. The complexity of the decoding algorithm is reported in terms of the average number of iterations and average number of arithmetic operations. The first measure enables one to assess the efficiency of path selection by the proposed score function, and the second reflects the practical complexity of the algorithm. 

Figure \ref{fScorePerfComp} illustrates the decoding error probability and average number of iterations performed by the sequential decoder (without list size adaptation  and early termination) for the case of the  path scores $M_1$, $M_2$ and  $M_{3}$. The first two scores correspond to the Niu-Chen stack decoding algorithm and its min-sum version \cite{niu2012stack}. 
Observe, that the Niu-Chen algorithm was shown to achieve exactly the same performance as the Tal-Vardy list decoding algorithm with the same value of $L$, provided that the size of the priority queue $D$ is sufficiently high.

It can be seen that employing score $M_2$ results in a marginal performance loss, but significant reduction of the average number of iterations performed by the decoder. This is due to existence of multiple paths $v_0^{n-1}$ with low probability $W_m^{(n-1)}\set{v_0^{n-1}|y_0^{n-1}}$, which add up (see \eqref{mTotalProb}) to non-negligible  probabilities $W_m^{(\phi-1)}\set{v_0^{\phi-1}|y_0^{n-1}}$. Hence, employing path score $M_1$ causes the decoder to inspect many incorrect paths $v_0^{\phi-1}$.
At sufficiently high SNR the most probable continuation $\mathbf V(v_0^{\phi-1})$ of a path extracted at some phase from the PQ with high probability satisfies  all freezing constraints, so that the value given by  $M_2$ score function turns out to be close to the final path score. This enables the decoder to avoid visiting many incorrect paths in the code tree. 

Even more significant complexity reduction is obtained if one employs the proposed path score $M_3$. The proposed path score enables one to correctly compare the probabilities of paths $v_0^{\phi-1}$ of different length $\phi$. This results in an order of magnitude reduction of the average number of iterations.  Observe that the performance of the decoder employing the proposed score $M_{3}$ is essentially the same as in the case of  score $M_2$.
 Table \ref{t1024512Compl} provides comparison of the average number of arithmetic operations performed by the sequential decoder (without early termination and list size adaptation) implementing the proposed path score function, and the one presented in our prior work \cite{miloslavskaya2014sequential}. It can be seen that employing the proposed score function  results substantially lower average decoding complexity.

Figure \ref{f2048} presents the performance and average decoding complexity of $(2048,1024)$ codes. For comparison, we report also the performance  of polar codes with CRC-16 under list decoding with adaptive list size (ALS) \cite{li2012adaptive},    a CCSDS LDPC code under belief propagation decoding, and the complexity of the min-sum implementation of the Tal-Vardy algorithm with fixed list size. The complexity is presented in terms of the average number of summations and comparison operations for polar (sub)codes, and average number of summations and evaluations of $\log\tanh(x/2)$ for the LDPC code.

 It can be seen that for the case of a polar code with CRC the performance loss of the sequential decoding algorithm with respect to the Niu-Chen algorithm and list decoding with adaptive list size is more significant than in the case of $(1024,512)$ code. Observe that in this case the decoder needs to perform iterations until vector $v_0^{n-1}$ with valid CRC is extracted from the PQ. Hence, in this case the assumption that the value of \eqref{mCodeProb} is dominated by the first term\footnote{Recall, that this assumption is only a probabilistic one.} may be invalid with high probability

 However, the performance loss is much less significant in the case of  polar subcodes. Observe that at high SNR the ALS decoding algorithm has slightly lower average complexity than the proposed one. However, it is not obvious how to use the ALS decoder in conjunction with polar subcodes, which provide substantially better performance than polar codes with CRC, since this algorithm relies on checking CRC of the obtained data vector in order to detect if another decoding attempt with larger list size is needed.
 In the low-SNR region, where the frame error rate is at least $10^{-3}$, the proposed algorithm has lower average complexity compared to the ALS one, and in the case of polar subcodes provides up to 0.1 dB performance gain. 
 
Observe also, that the average number of summation and comparison operations in the case of the proposed decoding algorithm quickly converges to the complexity of the min-sum SC algorithm.
It can be also seen  that  polar subcodes under the proposed sequential decoding algorithm with $L=32$ provide the performance comparable to the state-of-the-art LDPC code, and with larger $L$ far outperform it. The average complexity of the proposed algorithm turns out to be substantially less compared to that of  BP\ decoding.
 Observe also that reducing the maximal number of iterations for the BP algorithm results in 0.5 dB performance loss with almost no gain in complexity, while the sequential decoding algorithm enables much better performance-complexity tradeoffs. 
\begin{figure*}[th]
\centering
\includegraphics[width=0.7\textwidth]{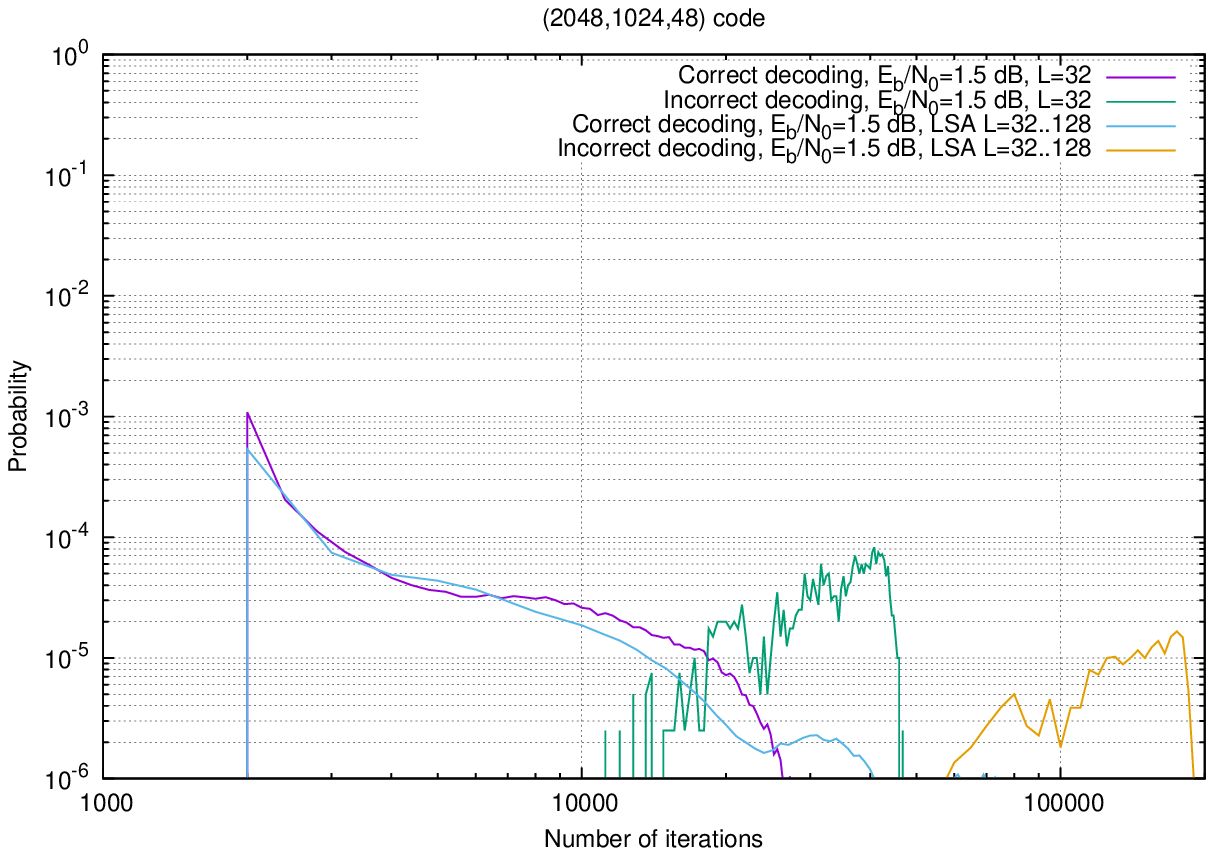}
\caption{Probability distribution of the number  of decoder iterations.}
\label{fDistr}
\end{figure*}

Figure \ref{fALS} illustrates the performance and complexity of the sequential decoding algorithm with score $M$ with list size adaptation (LSA) method described in Section \ref{sLSA}. Here list size $L$ was allowed to grow from $32$ to $L_{max}=128$. It was doubled after $\kappa_0=20$ iterations, such that at least one path was removed at step 6 of the algorithm described  in Section \ref{sSeqDecAlg}. 
 It can be seen that the proposed list size adaptation method enables one to achieve essentially the same performance as in the case of non-adaptive algorithm with $L=L_{max}$, but with lower average complexity. Observe that the proposed implementation of list size adaptation does not require restarting the decoder from scratch, as in the case of the  techniques considered in \cite{li2012adaptive,sarkis2016fast}.

Figure \ref{fDistr} illustrates the probability distribution of the number of iterations performed by the decoder in the case of correct and incorrect decoding. It can be seen that the distribution of the number of iterations in the event of correct decoding has rather heavy tail, and employing list size adaptation results in higher tail probabilities. In the event of decoding error the number of iterations may become quite high. However, there is very small probability that the decoding would be correct if the decoder has performed more than 20000 iterations. This can be also used for early termination of decoding. As it may be expected, employing list size adaptation results in heavier tail of the distribution for the case of incorrect decoding.

\begin{figure}[t]
\centering\includegraphics[width=0.5\textwidth]{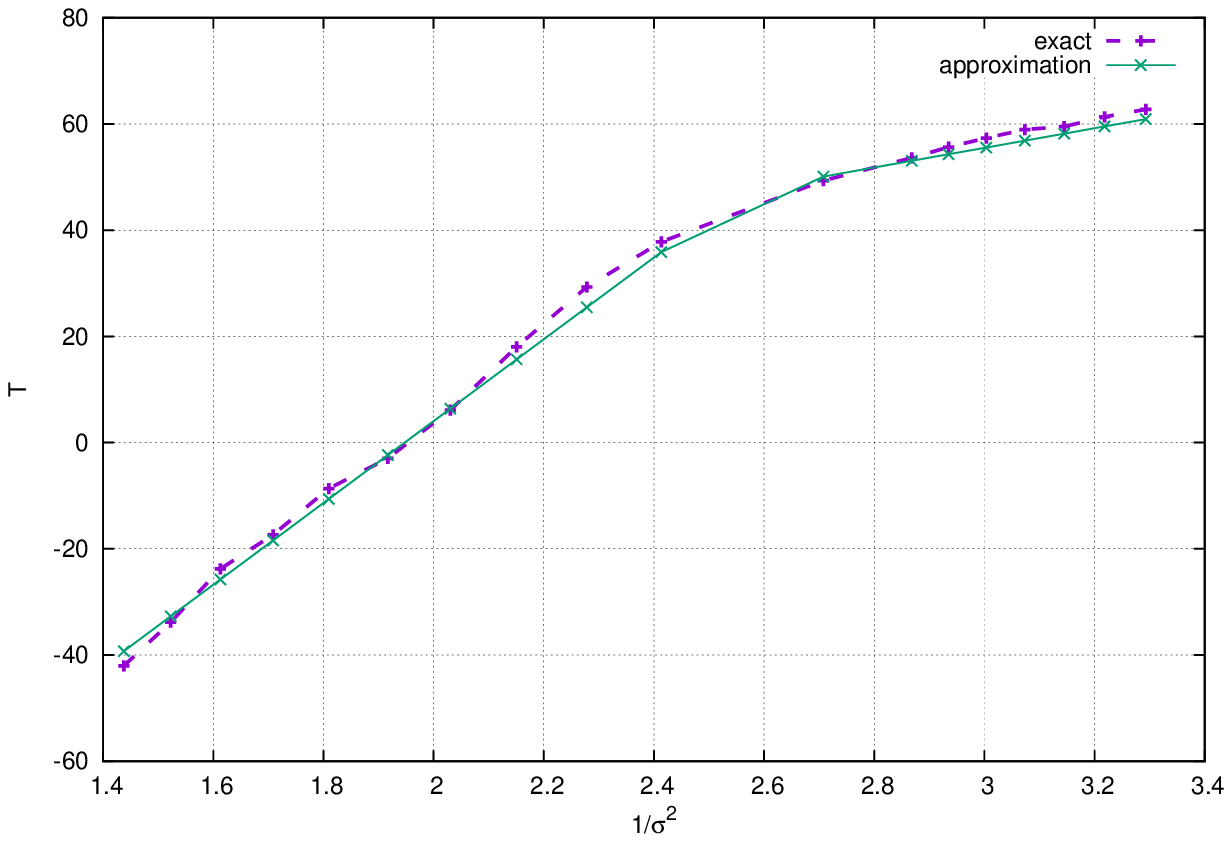}
\caption{Approximation     of the    early termination threshold for a $(1024, 736, 16)$ polar subcode. }
\label{fig:1024_736ts}
\end{figure}
\begin{table}[t]
\caption{Early termination parameters}
\label{t:et}
\centering
\begin{tabular}{|c| c |c |c|}
\hline
Code & $a_C$ & $b_C$ & $t_C$ \\
\hline
(1024,736) & -108.27 & 50.84 & 12 \\
\hline
(1024, 512) &  -116.37 & 121.41 & 43\\
\hline
\end{tabular} 
\end{table}

\begin{figure*}[th]
\centering
\includegraphics[width=0.7\textwidth]{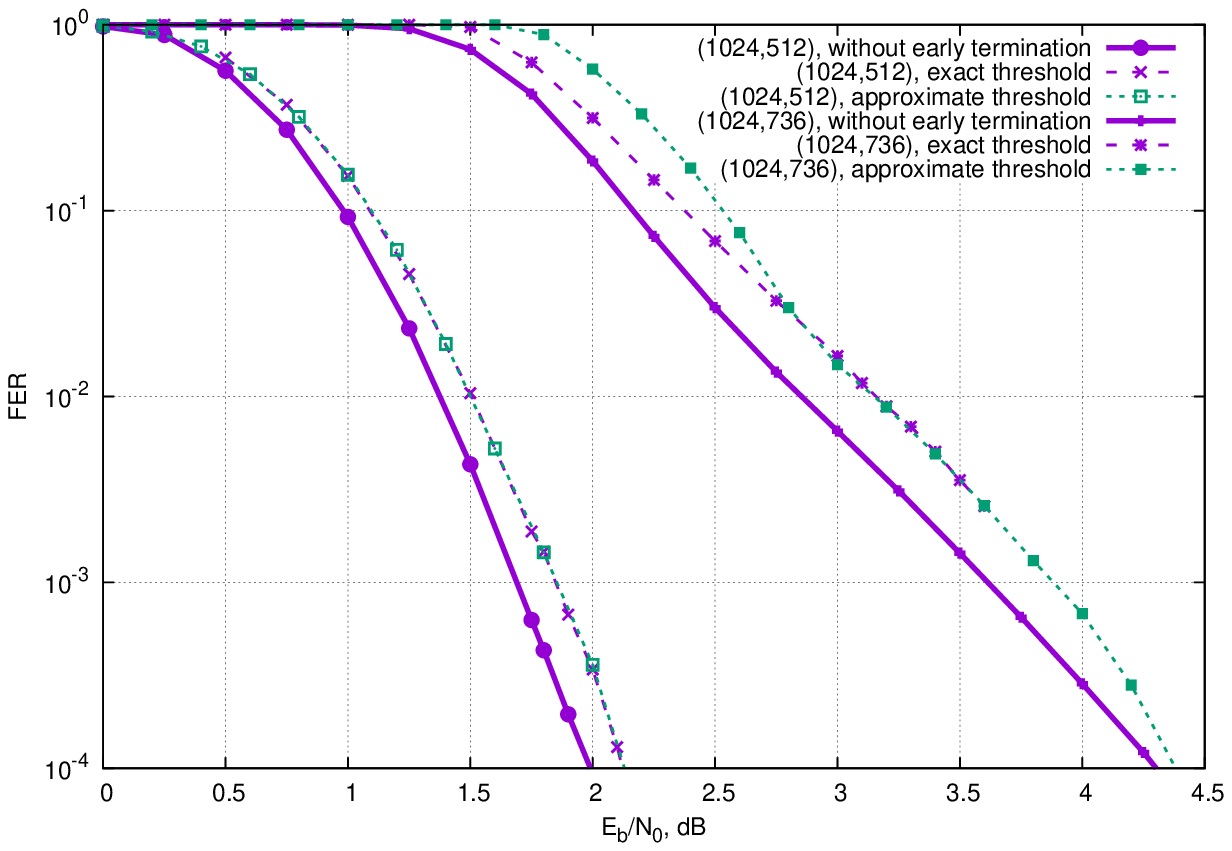}
\includegraphics[width=0.7\textwidth]{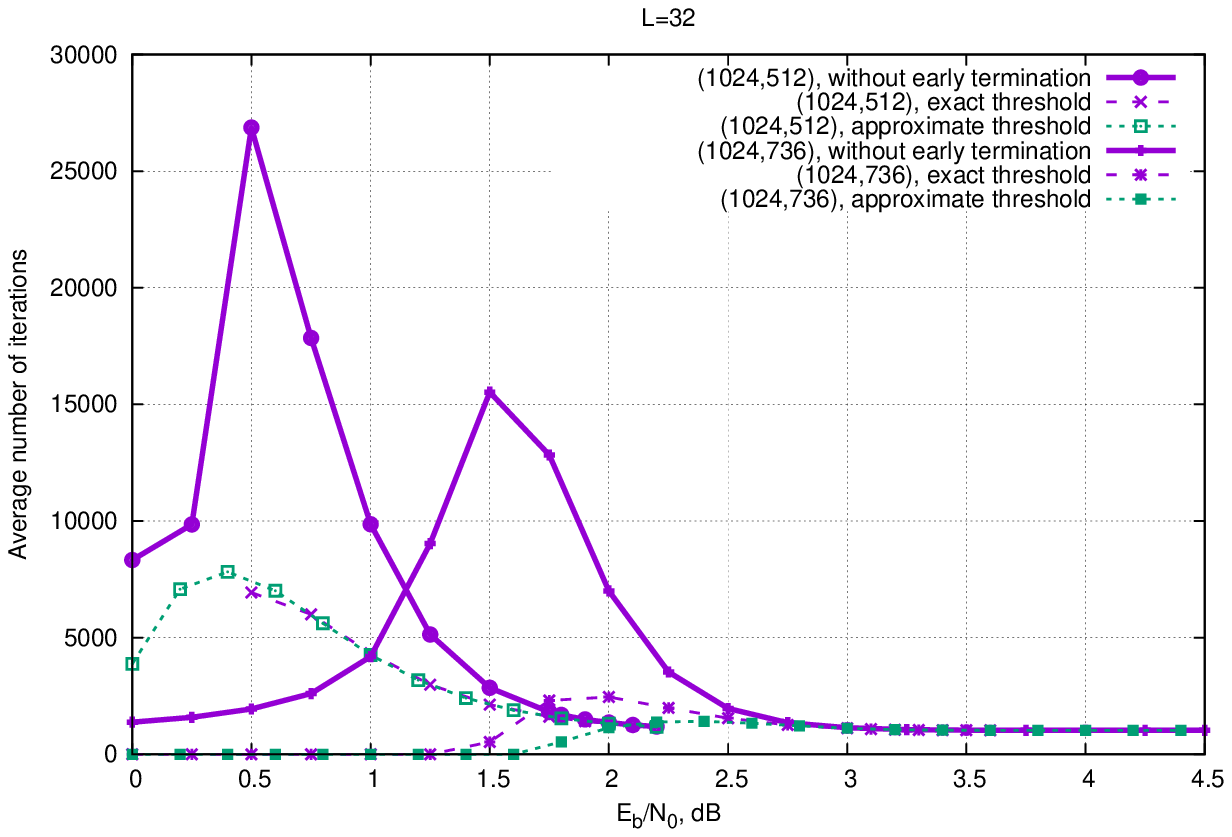}
\caption{Performance and complexity of sequential  decoding with  early termination. }
\label{fig:et}
\end{figure*}

Figure \ref{fig:1024_736ts} illustrates the termination threshold for the case of $(1024,736,16)$ polar subcode for different values of  $1/\sigma^2$. It can be seen that \eqref{mThresholdApprox} indeed represents a good approximation for $T$.

Figure \ref{fig:et} illustrates the performance and complexity of the sequential decoding algorithm with and without the proposed early termination method. The parameters of the early termination threshold function \eqref{mThresholdApprox} for the considered polar subcodes are given in Table \ref{t:et}.
It can be seen that the early termination condition enables one to significantly reduce the decoding complexity in the low-SNR\ region, where decoding error probability is high. This can be used to implement HARQ\ and adaptive coding protocols. Observe also, that the performance and complexity of the decoding algorithms employing the exact and approximate termination threshold functions are very close.

\section{Conclusions}
In this paper a novel decoding algorithm for polar (sub)codes was proposed. The proposed approach relies on the ideas of sequential decoding. The key contribution is a new path score function, which reflects the probability of the most likely continuation of a paths in the code tree, as well as the probability of satisfying already processed dynamic freezing constraints. The latter probability is difficult to compute exactly, so an approximation is proposed, which corresponds to the average behaviour of the correct path.  

 The proposed score function enables one to significantly reduce the average number of iterations performed by the decoder at the cost of a  negligible performance loss, compared to the case of the SCL decoding algorithm.  The worst-case complexity of the proposed decoding algorithm is $O(L n\log n)$, similarly to the case of the SCL\ algorithm. Furthermore, the proposed algorithm is based on the min-sum SC decoder, i.e. it can be implemented using only summation and comparison operations.

It was also shown that the performance of the proposed decoding algorithm can be substantially improved by recovering previously removed paths, and resuming their processing with increased list size. The improvement comes
at the cost of small increase of the average decoding complexity.
The average decoding complexity in the low-SNR region can be reduced by employing the proposed early termination method.


\end{document}